\definecolor{refcolor}{RGB}{0,0,190}
\theoremstyle{definition}
\newtheorem{theorem}{Theorem}[section]
\newtheorem{definition}[theorem]{Definition}
\newtheorem{proposition}[theorem]{Proposition}
\newtheorem{corollary}[theorem]{Corollary}
\newtheorem{remark}[theorem]{Remark}
\newtheorem{openproblem}[theorem]{Open Problem}
\def\({\left(}
\def\){\right)}
\newcommand{\R}{\mathbb{R}}
\newcommand{\de}{\textnormal{d}}
\newcommand{\grad}{\textnormal{grad }}
\newcommand{\tn}{\textnormal}
\newcommand{\ds}{\displaystyle}
\newcommand{\ie}{\textit{i.e.} }
\newcommand{\eg}{\textit{e.g.} }
\newcommand{\citep}[2]{\cite{#1}, p. #2}
\newcommand{\Ric}{\textnormal{Ric}}
\newcommand{\mf}[1]{\mathfrak{#1}}
\newcommand{\mc}[1]{\mathcal{#1}}
\newcommand{\sref}[1]{\S\ref{#1}}
\newcommand{\fivectlift}[1]{\mf L(#1)}
\newcommand{\ric}{\tn{Ric}}
\newcommand{\dsfrac}[2]{\ds{\frac{#1}{#2}}}
\newcommand{\metric}[1]{\langle#1\rangle}
\newcommand{\kosz}{\mc K}
\newcommand{\vectmodule}{\mf X}
\newcommand{\fivect}[1]{\vectmodule(#1)}
\newcommand{\cocontr}{{{}_\bullet}}
\newcommand{\annihg}{{g}_{\bullet}}
\def\hyph{-\penalty0\hskip0pt\relax}
\newcommand{\semiriem}{semi{\hyph}Riemannian}
\newcommand{\semireg}{semi{\hyph}regular}
\newcommand{\ssemireg}{Semi{\hyph}regular}
\newcommand{\quasireg}{quasi{\hyph}regular}
\newcommand{\qquasireg}{Quasi{\hyph}regular}
\newcommand{\nondeg}{non{\hyph}degenerate}
\newcommand{\flrw}{Friedmann-Lema\^itre-Robertson-Walker}
\newcommand{\FLRW}{FLRW}
\newcommand{\schw}{Schwarzschild}
\newcommand{\rn}{Reissner-Nordstr\"om}
\newcommand{\kn}{Kerr-Newman}
\begin{document} 
 
\title{Einstein equation at singularities}
\author{Cristi \ Stoica}
\date{\today. Horia Hulubei National Institute for Physics and Nuclear Engineering, Bucharest, Romania. E-mail: holotronix@gmail.com}

\begin{abstract}
Einstein's equation is rewritten in an equivalent form, which remains valid at the singularities in some major cases. These cases include the Schwarzschild singularity, the Friedmann-Lema\^itre-Robertson-Walker Big Bang singularity, isotropic singularities, and a class of warped product singularities. This equation is constructed in terms of the Ricci part of the Riemann curvature (as the Kulkarni-Nomizu product between Einstein's equation and the metric tensor).
\bigskip
\noindent 
\keywords{singular General Relativity,singular semi-Riemannian manifolds,singular semi-Riemannian geometry,degenerate manifolds,quasi-regular semi-Riemannian manifolds,quasi-regular semi-Riemannian geometry}
\end{abstract}


\maketitle

\setcounter{tocdepth}{1}
\tableofcontents

\section*{Introduction}

The singularities in General Relativity can be avoided only if the stress-energy tensor in the right hand side of Einstein's equation satisfies some particular conditions. One way to avoid them was proposed by the authors of \cite{corda2010removingBHsingularities}, who have shown that the singularities can be removed by constructing the stress-energy tensor with non-linear electrodynamics. On the other hand, Einstein's
equation leads to singularities in general conditions \cite{Pen65,Haw66i,Haw66ii,Haw67iii,HP70,HE95}, and there the time evolution breaks down. Is this a problem of the theory itself or of the way it is formulated?

This paper proposes a version of Einstein's equation which is equivalent to the standard version at the points of spacetime where the metric is non-singular. But unlike Einstein's equation, in many cases it can be extended at and beyond the singular points.

Let $(M,g)$ be a Riemannian or a {\semiriem} manifold of dimension $n$.
It is useful to recall the definition of the \textit{Kulkarni-Nomizu product} of two symmetric bilinear forms $h$ and $k$,
\begin{equation}
\label{eq_kulkarni_nomizu}
	(h\circ k)_{abcd} := h_{ac}k_{bd} - h_{ad}k_{bc} + h_{bd}k_{ac} - h_{bc}k_{ad}.
\end{equation}
The Riemann curvature tensor can be decomposed algebraically as
\begin{equation}
\label{eq_ricci_decomposition}
	R_{abcd} = S_{abcd} + E_{abcd} + C_{abcd}.
\end{equation}
where 
\begin{equation}
\label{eq_ricci_part_S}
	S_{abcd} = \dsfrac{1}{2n(n-1)}R(g\circ g)_{abcd}
\end{equation}
is the scalar part of the Riemann curvature and 
\begin{equation}
\label{eq_ricci_part_E}
	E_{abcd} = \dsfrac{1}{n-2}(S \circ g)_{abcd}
\end{equation}
is the \textit{semi-traceless part} of the Riemann curvature. Here
\begin{equation}
\label{eq_ricci_traceless}
S_{ab} := R_{ab} - \dsfrac{1}{n}Rg_{ab}
\end{equation}
is the traceless part of the Ricci curvature.

The \textit{Weyl curvature tensor} is defined as the \textit{traceless part} of the Riemann curvature 
\begin{equation}
\label{eq_weyl_curvature}
	C_{abcd} = R_{abcd} - S_{abcd} - E_{abcd}.
\end{equation}

The Einstein equation is
\begin{equation}
\label{eq_einstein}
	G_{ab} + \Lambda g_{ab} = \kappa T_{ab},
\end{equation}
where $T_{ab}$ is the stress-energy tensor of the matter, the constant $\kappa$ is defined as $\kappa:=\dsfrac{8\pi \mc G}{c^4}$, where $\mc G$ and $c$ are the gravitational constant and the speed of light, and $\Lambda$ is the \textit{cosmological constant}. 
The term
\begin{equation}
\label{eq_einstein_tensor}
	G_{ab}:=R_{ab}-\frac 1 2 R g_{ab}
\end{equation}
is the Einstein tensor, constructed from the \textit{Ricci curvature} $R_{ab} := g^{st}R_{asbt}$ and the \textit{scalar curvature} $R := g^{st}R_{st}$.

As it is understood, the Einstein equation establishes the connection between curvature and stress-energy. The curvature contributes to the equation in the form of the Ricci tensor $R_{ab}$ and the scalar curvature. In the proposed equation, the curvature contributes in the form of the semi-traceless and scalar parts of the Riemann tensor, $E_{abcd}$ \eqref{eq_ricci_part_E} and $S_{abcd}$ \eqref{eq_ricci_part_S}, which are tensors of the same order and have the same symmetries as $R_{abcd}$. 

The Ricci tensor $R_{ab}$ is obtained by contracting the tensor $E_{abcd}+S_{abcd}$, and has the same information (if the metric is {\nondeg}). One can move from the fourth-order tensors $E_{abcd}+S_{abcd}$ to $R_{ab}$ by contraction, and one can move back to them by taking the Kulkarni-Nomizu product \eqref{eq_kulkarni_nomizu}, but they are equivalent. Yet, if the metric $g_{ab}$ is degenerate, then $g^{ab}$ and the contraction $R_{ab}=g^{st}(E_{asbt}+S_{asbt})$ become divergent, even if $g_{ab}$, $E_{abcd}$, and $S_{abcd}$ are smooth. This suggests the possibility that $E_{abcd}$ and $S_{abcd}$ are more fundamental that the Ricci and scalar curvatures.

This suggestion is in agreement with the following observation. In the case of \textit{electrovac} solutions, where $F_{ab}$ is the electromagnetic tensor,
\begin{equation}
\label{eq_stress_energy_maxwell}
	T_{ab}=\frac{1}{4\pi}\(\frac 1 4 g_{ab} F_{st}F^{st} - F_{as} F_b{}^s\)=-\frac{1}{8\pi}\(F_{ac}F_b{}^c + {}^\ast F_{ac} {}^\ast F_b{}^c\),
\end{equation}
where ${}^\ast$ is the Hodge duality operation. It can be obtained by contracting the semi-traceless part of the Riemann tensor
\begin{equation}
\label{eq_stress_energy_maxwell_expanded}
	E_{abcd}=-\frac{\kappa}{8\pi}\(F_{ab}F_{cd} + {}^\ast F_{ab} {}^\ast F_{cd}\).
\end{equation}

Therefore it is natural to at least consider an equation in terms of these fourth-order tensors, rather than the Ricci and scalar curvatures.

The main advantage of this method is that there are singularities in which the new formulation of the Einstein equation is not singular (although the original Einstein equation exhibits singularities, obtained when contracting with the singular tensor $g^{ab}$). The expanded Einstein equation is written in terms of the smooth geometric objects $E_{abcd}$ and $S_{abcd}$. Because of this the solutions can be extended at singularities where the original Einstein equation diverges. This doesn't mean that the singularities are removed; for example the Kretschmann scalar $R_{abcd}R^{abcd}$ is still divergent at some of these singularities. But this is not a problem, since the Kretschmann scalar is not part of the evolution equation. It is normally used as an indicator that there is a singularity, for example to prove that the {\schw} singularity at $r=0$ cannot be removed by coordinate changes, as the event horizon singularity can. While a singularity of the Kretschmann scalar indicates the presence of a singularity of the curvature, it doesn't have implications on whether the singularity can be resolved or not. In the proposed equation we use $R_{abcd}$ which is smooth at the studied singularities, and we don't use $R^{abcd}$ which is singular and causes the singularity of the Kretschmann scalar.

A second reason to consider the expanded version of the Einstein equation and the {\quasireg} singularities at which it is smooth is that at these singularities the Weyl curvature tensor vanishes. The implications of this feature will be explored in \cite{Sto12c}.

It will be seen that there are some important examples of singularities which turn out to be {\quasireg}. While singularities still exist, our approach provides a description in terms of smooth geometric objects which remain finite at singularities. By this we hope to improve our understanding of singularities and to distinguish those to which our resolution applies.

The \textit{expanded Einstein equations} and the {\quasireg} spacetimes on which they hold are introduced in section \sref{s_einstein_exp_qreg}. They are obtained  by taking the Kulkarni-Nomizu product between Einstein's equation and the metric tensor. In a {\quasireg} spacetime the metric tensor becomes degenerate at singularities in a way which cancels them and makes the equations smooth.

The situations when the new version of Einstein's equation extends at singularities include isotropic singularities (section \sref{s_qreg_examples_isotropic}) and a class of warped product singularities (section \sref{s_qreg_examples_warped}). It also contains the {\schw} singularity (section \sref{s_qreg_examples_schw}) and the {\FLRW} Big Bang singularity (section \sref{s_qreg_examples_flrw}).

\section{Expanded Einstein equation and {\quasireg} spacetimes}
\label{s_einstein_exp_qreg}

\subsection{The expanded Einstein equation}
\label{s_einstein_exp}

An equation which is equivalent to Einstein's equation whenever the metric tensor $g_{ab}$ is {\nondeg}, but is valid also in a class of situations when $g_{ab}$ becomes degenerate and Einstein's tensor is not defined will be discussed in this section. Later it will be shown that the proposed version of Einstein's equation remains smooth in various important situations such as the FLRW Big-Bang singularity, isotropic singularities, and at the singularity of the {\schw} black hole.

We introduce the \textit{expanded Einstein equation}
\begin{equation}
\label{eq_einstein_expanded}
	(G\circ g)_{abcd} + \Lambda (g\circ g)_{abcd} = \kappa (T\circ g)_{abcd}.
\end{equation}

If the metric is {\nondeg} then the Einstein equation and its expanded version are equivalent. This can be seen by contracting the expanded Einstein equation, for instance in the indices $b$ and $d$. From \eqref{eq_kulkarni_nomizu} the contraction in $b$ and $d$ of a Kulkarni-Nomizu product $(h\circ g)_{abcd}$ is
\begin{equation}
\hat h_{ac}:=(h\circ g)_{asct}g^{st} = h_{ac}g^s_s - h_{at}\delta^t_c + h^s_sg_{ac} - h_{sc}\delta^s_a = 2h_{ac} + h^s_sg_{ac}.
\end{equation}
From $\hat h_{ac}$ the original tensor $h_{ac}$ can be obtained again by
\begin{equation}
\label{eq_expanded_to_standard}
h_{ac}=\frac 1 2 \hat h_{ac} - \frac 1{12}\hat h{}^s_s g_{ac}.
\end{equation}
By this procedure the terms $G_{ab}$, $T_{ab}$, and $\Lambda g_{ab}$ can be recovered from the equation \eqref{eq_einstein_expanded}, thus obtaining the Einstein equation \eqref{eq_einstein}
. Hence, the Einstein equation and its expanded version are equivalent for a {\nondeg} metric.

If the metric becomes degenerate its inverse becomes singular, and in general the Riemann, Ricci, and scalar curvatures, and consequently the Einstein tensor $G_{ab}$, diverge. For certain cases the metric term from the Kulkarni-Nomizu product $G\circ g$ tends to $0$ fast enough to cancel the divergence of the Einstein tensor. The {\quasireg} singularities satisfy the condition that the divergence of $G$ is compensated by the degeneracy of the metric, so that $G\circ g$ is smooth.

This cancellation allows us to weaken the condition that the metric tensor is {\nondeg}, to some cases when it can be degenerate. It will be seen that these cases include some important singularities.

\subsection{A more explicit form of the expanded Einstein equation}
\label{s_einstein_exp_explicit}

To give a more explicit form of the expanded Einstein equation, the \textit{Ricci decomposition} of the Riemann curvature tensor is used (see \eg \cite{ST69,BESS87,GHLF04}).

By using the equations \eqref{eq_einstein_tensor}
 and \eqref{eq_ricci_traceless} in dimension $n=4$, the Einstein tensor in terms of the traceless part of the Ricci tensor and the scalar curvature can be written:
\begin{equation}
G_{ab} = S_{ab} - \dsfrac{1}{4}R g_{ab}.
\end{equation}

This equation can be used to calculate the \textit{expanded Einstein tensor}:
\begin{equation}
\label{eq_einstein_tensor_expanded}
\begin{array}{lrl}
G_{abcd} &:=& (G\circ g)_{abcd} \\
&=& (S \circ g)_{abcd} - \dsfrac{1}{4}R (g\circ g)_{abcd}\\
&=& 2 E_{abcd} - 6 S_{abcd}.
\end{array}
\end{equation}
The expanded Einstein equation now takes the form
\begin{equation}
\label{eq_einstein_expanded_explicit}
	2 E_{abcd} - 6 S_{abcd} + \Lambda (g\circ g)_{abcd} = \kappa (T\circ g)_{abcd}.
\end{equation}

\subsection{Quasi-regular spacetimes}
\label{s_qreg_spacetimes}

We are interested in singular spacetimes on which the expanded Einstein equation \eqref{eq_einstein_expanded} can be written and is smooth. From \eqref{eq_einstein_expanded_explicit} it can be seen that this requires the smoothness of the tensors $E_{abcd}$ and $S_{abcd}$.

In addition we are interested to have the nice properties of the {\semireg} spacetimes.
As showed in \cite{Sto11a}, the {\semireg} manifolds are a class of singular {\semiriem} manifolds which are nice for several reasons, one of them being that the Riemann tensor $R_{abcd}$ is smooth.

First, a contraction between covariant indices is needed. This is in general prohibited by the fact that when the metric tensor $g_{ab}$ becomes degenerate it doesn't admit a reciprocal $g^{ab}$.
Although the metric $g_{ab}$ can't induce an invariant inner product on the cotangent space $T_p^*M$, it induces one on its subspace $\flat(T_pM)$, where $\flat:T_pM\to T_p^*M$ is the vector space morphism defined by $X^\flat(Y):=\metric{X,Y}$, for any $X,Y\in T_pM$. Equivalently, $\flat(T_pM)$ is the space of $1$-forms $\omega$ on $T_pM$ so that $\omega|_{\ker\flat}=0$. The morphism $\flat$ is isomorphism if and only if $g$ is {\nondeg}; in this case its inverse is denoted by $\sharp$.
The inner product on $\flat(T_pM)$ is then defined by $\annihg(X^\flat,Y^\flat):=\metric{X,Y}$ and it is invariant.
This allows us to define a contraction between covariant slots of a tensor $T$, which vanishes when vectors from $\ker\flat$ are plugged in those slots.
This will turn out to be enough for our needs.
We denote the contractions between covariant indices of a tensor $T$ by $T(\omega_1,\ldots,\omega_r,v_1,\ldots,\cocontr,\ldots,\cocontr,\ldots,v_s)$.

A degenerate metric also prohibits in general the construction of a Levi-Civita connection. For vector fields we use instead of $\nabla_XY$, the \textit{Koszul form}, defined as:
\begin{equation*}
	\kosz:\fivect M^3\to\R,
\end{equation*}
\begin{equation}
\label{eq_Koszul_form}
	\kosz(X,Y,Z) :=\ds{\frac 1 2} \{ X \metric{Y,Z} + Y \metric{Z,X} - Z \metric{X,Y} 
	- \metric{X,[Y,Z]} + \metric{Y, [Z,X]} + \metric{Z, [X,Y]}\}
\end{equation}
which defines the Levi-Civita connection by $\nabla_XY=\kosz(X,Y,\_)^\sharp$ for a {\nondeg} metric, but not when the metric becomes degenerate.
We define now {\semireg} manifolds, on which we can define covariant derivatives for a large class of differential forms and tensors. We can also define a generalization of the Riemann curvature $R_{abcd}$, which turns out to be smooth and non-singular.

\begin{definition}
\label{def_semi_regular}
A singular {\semiriem} manifold satisfying the condition that $\kosz(X,Y,\_)\in\flat(T_pM)$, and that the contraction $\kosz(X,Y,\cocontr)\kosz(Z,T,\cocontr)$ is smooth for any local vector fields $X,Y,Z,T$, is named \textit{{\semireg} manifold}, and its metric is called \textit{{\semireg} metric}.
A $4$-dimensional {\semireg} manifold with metric having the signature at each point $(r,s,t)$, $s\leq 3$, $t\leq 1$, but which is {\nondeg} on a dense subset, is called \textit{{\semireg} spacetime} \cite{Sto11a}. 
\end{definition}

In \cite{Sto11a} we defined the Riemann curvature $R_{abcd}$ for {\semireg} metrics, even for {\nondeg} metrics, in a way which avoids the undefined $\nabla_XY$, but relies on the defined and smooth $\kosz(X,Y,Z)$, by
\begin{equation}
\label{eq_riemann_curvature_tensor_coord}
	R_{abcd}= \partial_a \Gamma_{bcd} - \partial_b \Gamma_{acd} + \Gamma_{ac\cocontr}\Gamma_{bd\cocontr} - \Gamma_{bc\cocontr}\Gamma_{ad\cocontr},
\end{equation}
where $\Gamma_{abc}=\kosz(\partial_a,\partial_b,\partial_c)$ are the Christoffel's symbols of the first kind. From Definition \ref{def_semi_regular}, $R_{abcd}$ is smooth. 
More details on the {\semireg} manifolds can be found in  \cite{Sto11a,Sto11b,Sto12e}.

In a {\semireg} spacetime, since $R_{abcd}$ is smooth, the densitized Einstein tensor $G_{ab}\det g$ is smooth \cite{Sto11a}, and a densitized version of the Einstein equation can be written, which is equivalent to the usual version when the metric is {\nondeg}:
\begin{equation}
\label{eq_einstein_idx:densitized}
	G_{ab}\sqrt{-g}^W + \Lambda g_{ab}\sqrt{-g}^W = \kappa T_{ab}\sqrt{-g}^W,
\end{equation}
where it is enough to take the weight $W\leq 2$.
Although the {\semireg} approach is more general, here is explored the {\quasireg} one, which is more strict. Consequently, these results are stronger.

\begin{definition}
\label{def_quasi_regular}
We say that a {\semireg} manifold $(M,g_{ab})$ is \textit{{\quasireg}}, and that $g_{ab}$ is a \textit{{\quasireg} metric}, if:
\begin{enumerate}
	\item 
$g_{ab}$ is {\nondeg} on a subset dense in $M$
	\item 
	the tensors $S_{abcd}$ and $E_{abcd}$ defined at the points where the metric is {\nondeg} extend smoothly to the entire manifold $M$.
\end{enumerate}
If the {\quasireg} manifold $M$ is a {\semireg} spacetime, we call it \textit{{\quasireg} spacetime}. Singularities of {\quasireg} manifolds are called {\quasireg}.
\end{definition}

It can be seen that on an {\quasireg} spacetime the expanded Einstein tensor can be extended at the points where the metric is degenerate, and the extension is smooth. This is in fact the motivation of Definition \ref{def_quasi_regular}.

\section{Examples of {\quasireg} spacetimes}
\label{s_qreg_examples}

The {\quasireg} spacetimes are more general than the regular ones (those with {\nondeg} metric), containing them as a particular case. The question is, are they general enough to cover the singularities which plagued General Relativity? In the following it will be seen that at least for some relevant cases the answer is positive. It will be seen that the class of {\quasireg} singularities contain isotropic singularities \sref{s_qreg_examples_isotropic}, singularities obtained as warped products \sref{s_qreg_examples_warped} (including the {\flrw} spacetime \sref{s_qreg_examples_flrw}), and even the {\schw} singularity \sref{s_qreg_examples_schw}. The existence of these examples which are extensively researched justifies the study of the more general {\quasireg} singularities and of the extended Einstein equations.

\subsection{Isotropic singularities}
\label{s_qreg_examples_isotropic}

\textit{Isotropic singularities} occur in conformal rescalings of {\nondeg} metrics, when the scaling function cancels. They were extensively studied by Tod \cite{Tod87,Tod90,Tod91,Tod92,Tod02,Tod03}, Claudel \& Newman \cite{CN98}, Anguige \& Tod \cite{AT99i,AT99ii}, in connection with cosmological models. The following theorem shows that the isotropic singularities are {\quasireg}.

\begin{theorem}[Isotropic singularities]
\label{thm_quasireg_example_conformal}
Let $(M,g_{ab})$ be a regular spacetime (we assume therefore that the metric $g_{ab}$ is {\nondeg}). Then, if $\Omega:M\to\R$ is a smooth function which is non-zero on a dense subset of $M$, the spacetime $(M,\widetilde g_{ab} :=\Omega^2 g_{ab})$ is {\quasireg}.
\end{theorem}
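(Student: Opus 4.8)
The plan is to verify the three requirements packaged in Definition \ref{def_quasi_regular}: that $(M,\widetilde g_{ab})$ is a {\semireg} spacetime, that $\widetilde g_{ab}$ is {\nondeg} on a dense set, and that the tensors $\widetilde S_{abcd}$ and $\widetilde E_{abcd}$ defined where $\widetilde g$ is {\nondeg} extend smoothly across the zero set $Z:=\{p\in M:\Omega(p)=0\}$. The density requirement is immediate: wherever $\Omega\neq 0$ we have $\widetilde g_{ab}=\Omega^2 g_{ab}$ with $\Omega^2>0$, so $\widetilde g$ is {\nondeg} and shares the Lorentzian signature of $g$ there; as $\Omega$ is non-zero on a dense set, this is a dense subset of $M$. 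On $Z$ one has $\widetilde g_{ab}=0$, a totally degenerate metric of signature $(4,0,0)$, compatible with the signature bounds of a {\semireg} spacetime.

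The heart of the argument is the smooth extension of $\widetilde S_{abcd}$ and $\widetilde E_{abcd}$, and the mechanism I would exploit is that every Kulkarni-Nomizu product in their definitions carries a factor of $\widetilde g=\Omega^2 g$, supplying positive powers of $\Omega$ that cancel the poles in the conformal transformation of the curvature. From \eqref{eq_ricci_part_S} and $(\widetilde g\circ\widetilde g)_{abcd}=\Omega^4(g\circ g)_{abcd}$ one has $\widetilde S_{abcd}=\tfrac{1}{2n(n-1)}\bigl(\Omega^4\widetilde R\bigr)(g\circ g)_{abcd}$, so it suffices that $\Omega^4\widetilde R$ be smooth; the standard conformal law writes $\widetilde R$ as $\Omega^{-2}R$ plus terms of order $\Omega^{-3}\nabla^2\Omega$ and $\Omega^{-4}|\nabla\Omega|^2_g$, and multiplying by $\Omega^4$ leaves a polynomial in $\Omega,\nabla\Omega,\nabla^2\Omega$ with smooth $g$-coefficients. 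Similarly, writing $\widetilde S^{\mathrm{Ric}}_{ab}:=\widetilde R_{ab}-\tfrac1n\widetilde R\,\widetilde g_{ab}$ for the trace-free Ricci of $\widetilde g$, equation \eqref{eq_ricci_part_E} together with $(\widetilde S^{\mathrm{Ric}}\circ\widetilde g)_{abcd}=\Omega^2(\widetilde S^{\mathrm{Ric}}\circ g)_{abcd}$ gives $\widetilde E_{abcd}=\tfrac{1}{n-2}\bigl((\Omega^2\widetilde S^{\mathrm{Ric}})\circ g\bigr)_{abcd}$, reducing matters to the smoothness of $\Omega^2\widetilde S^{\mathrm{Ric}}_{ab}$. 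Here I would use that the factors $\Omega^{\pm2}$ cancel in the trace-free Ricci, leaving $\widetilde S^{\mathrm{Ric}}_{ab}=S^{\mathrm{Ric}}_{ab}-(n-2)\bigl(A_{ab}-\tfrac1n(\mathrm{tr}_g A)g_{ab}\bigr)$ with $A_{ab}=\Omega^{-1}\nabla_a\nabla_b\Omega-2\Omega^{-2}\nabla_a\Omega\,\nabla_b\Omega$; since $\Omega^2 A_{ab}=\Omega\nabla_a\nabla_b\Omega-2\nabla_a\Omega\,\nabla_b\Omega$ is smooth (as is its $g$-trace-free part), $\Omega^2\widetilde S^{\mathrm{Ric}}_{ab}$ is smooth, and hence so is $\widetilde E_{abcd}$.

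For the {\semireg} conditions I would compute the Koszul form of $\widetilde g$ from \eqref{eq_Koszul_form}: using $\metric{Y,Z}_{\widetilde g}=\Omega^2\metric{Y,Z}_g$ one obtains $\widetilde\kosz(X,Y,Z)=\Omega^2\kosz_g(X,Y,Z)+\Omega\bigl((X\Omega)\metric{Y,Z}_g+(Y\Omega)\metric{Z,X}_g-(Z\Omega)\metric{X,Y}_g\bigr)$. Both terms carry an explicit factor of $\Omega$, so on $Z$ the form $\widetilde\kosz(X,Y,\_)$ vanishes identically and therefore lies in $\flat_{\widetilde g}(T_pM)=\{0\}$ (the condition being automatic where $\widetilde g$ is {\nondeg}). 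Substituting the same expression into $\widetilde\kosz(X,Y,\cocontr)\widetilde\kosz(Z,T,\cocontr)$, which on the {\nondeg} set is the $\Omega^{-2}g^{st}$-contraction, the $\Omega^{-2}$ meets a product that is $O(\Omega^2)$, so all powers of $\Omega$ are non-negative and the result is a smooth combination of $\kosz_g$, $\Omega$ and $\nabla\Omega$, extending across $Z$; this verifies both {\semireg} conditions.

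The main obstacle is purely the power-counting bookkeeping of the second paragraph: one must confirm that the highest poles of $\widetilde R$ and $\widetilde S^{\mathrm{Ric}}$ are of orders $\Omega^{-4}$ and $\Omega^{-2}$, matching exactly the $\Omega^4$ and $\Omega^2$ produced by the two Kulkarni-Nomizu factors. A minor point is that the conformal formulas are usually written with $\omega=\ln\Omega$, undefined on $Z$ and where $\Omega<0$; I would treat $\ln|\Omega|$ as a computational device on the dense {\nondeg} set and note that the resulting expressions are polynomial in $\Omega,\nabla\Omega,\nabla^2\Omega$ with smooth $g$-curvature coefficients, so they furnish the required smooth extension over all of $M$ regardless of the sign or vanishing of $\Omega$.
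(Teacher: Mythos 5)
Your proof is correct, and its engine is the same one the paper uses: each Kulkarni--Nomizu factor of $\widetilde g=\Omega^2 g$ supplies $\Omega^2$ (and $\widetilde g\circ\widetilde g$ supplies $\Omega^4$), which exactly absorbs the poles of the conformally transformed curvature. The genuine differences are these. First, the paper does not prove semi-regularity at all --- it cites \cite{Sto11a} for the statement that $(M,\Omega^2 g)$ is {\semireg} --- whereas your third paragraph verifies both conditions of Definition \ref{def_semi_regular} directly from the transformation law $\widetilde\kosz(X,Y,Z)=\Omega^2\kosz(X,Y,Z)+\Omega\bigl((X\Omega)\metric{Y,Z}+(Y\Omega)\metric{Z,X}-(Z\Omega)\metric{X,Y}\bigr)$, making the argument self-contained. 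Second, for the curvature the paper proves smoothness of $\widetilde\Ric\circ\widetilde g$ and of $\widetilde R\,(\widetilde g\circ\widetilde g)$ separately and then notes $\widetilde E_{abcd}$, $\widetilde S_{abcd}$ are linear combinations of these, while you go through the trace-free Ricci, $\widetilde S^{\mathrm{Ric}}_{ab}=S^{\mathrm{Ric}}_{ab}-(n-2)\bigl(A_{ab}-\tfrac1n(\mathrm{tr}_g A)g_{ab}\bigr)$ with $A_{ab}=\Omega^{-1}\Omega_{;ab}-2\Omega^{-2}\Omega_{;a}\Omega_{;b}$; since $A_{ab}=-\Omega(\Omega^{-1})_{;ab}$, your key identity $\Omega^2A_{ab}=\Omega\Omega_{;ab}-2\Omega_{;a}\Omega_{;b}$ is precisely the paper's displayed identity $\Omega^3(\Omega^{-1})_{;ab}=2\Omega_{;a}\Omega_{;b}-\Omega\Omega_{;ab}$ up to sign, so the two routes are equivalent in substance. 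Your packaging has one small advantage: it works uniformly in dimension $n$, whereas the paper's observation that the worst pole of $\widetilde R$ is $\Omega^{-3}$ is special to $n=4$ (in general there is an additional $\Omega^{-4}|\nabla\Omega|^2_g$ term, which your $\Omega^4$ factor still kills). One caveat you share with the paper: the contraction condition and the smoothness of $\widetilde S_{abcd}$, $\widetilde E_{abcd}$ are verified in the sense of smooth extension from the dense set $\{\Omega\neq 0\}$; at points of $Z$ with $d\Omega\neq 0$ that extension is generally nonzero, while the pointwise contraction through the co-inner product on $\flat(T_pM)=\{0\}$ vanishes, so the ``extends smoothly'' reading of Definitions \ref{def_semi_regular} and \ref{def_quasi_regular} is the one in force --- a feature of the framework inherited from \cite{Sto11a}, not a gap particular to your argument.
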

\begin{proof}
From \cite{Sto11a} is known that $(M,\widetilde g_{ab})$ is {\semireg}.

The Ricci and the scalar curvatures take the following forms (\cite{HE95}, p. 42.):
\begin{equation}
\label{eq_conformal_ricci_curv_ud}
\widetilde R^a{}_b = \Omega^{-2}R^a{}_b + 2\Omega^{-1}(\Omega^{-1})_{;bs}g^{as}-\dsfrac 1 2\Omega^{-4}(\Omega^2)_{;st}g^{st}\delta^a{}_b
\end{equation}
\begin{equation}
\label{eq_conformal_scalar_curv}
\widetilde R=\Omega^{-2}R-6\Omega^{-3}\Omega_{;st}g^{st}
\end{equation}
where the covariant derivatives correspond to the metric $g$.
From equation \eqref{eq_conformal_ricci_curv_ud} follows that
\begin{equation}
\widetilde R_{ab}=\Omega^2 g_{as} \widetilde R^s{}_b=R_{ab} + 2\Omega(\Omega^{-1})_{;ab}-\dsfrac 1 2\Omega^{-2}(\Omega^2)_{;st}g^{st}g_{ab},
\end{equation}
which tends to infinity when $\Omega\to 0$. But we are interested to prove the smoothness of the Kulkarni-Nomizu product $\widetilde\Ric\circ \widetilde g$. We notice that the term $\widetilde g$ contributes with a factor $\Omega^2$, and it is enough to prove the smoothness of
\begin{equation}
\Omega^2\widetilde R_{ab}=\Omega^2 R_{ab} + 2\Omega^3(\Omega^{-1})_{;ab}-\dsfrac 1 2(\Omega^2)_{;st}g^{st}g_{ab},
\end{equation}
which follows from
\begin{equation}
\begin{array}{lll}
\Omega^3(\Omega^{-1})_{;ab} &=& \Omega^3\((\Omega^{-1})_{;a}\)_{;b} = \Omega^3\(-\Omega^{-2}\Omega_{;a}\)_{;b} \\
&=& \Omega^3\(2\Omega^{-3}\Omega_{;b}\Omega_{;a} - \Omega^{-2}\Omega_{;ab}\) \\
&=& 2\Omega_{;a}\Omega_{;b} - \Omega\Omega_{;ab} \\
\end{array}
\end{equation}
Hence, the tensor $\widetilde\Ric\circ \widetilde g$ is smooth. The fact that $\widetilde R \widetilde g\circ \widetilde g$ is smooth follows from the observation that $\widetilde g\circ \widetilde g$ contributes with $\Omega^4$, and the least power in which $\Omega$ appears in the expression \eqref{eq_conformal_scalar_curv} of $\widetilde R$ is $-3$.

From the above follows that $\widetilde E_{abcd}$ and $\widetilde S_{abcd}$ are smooth. Hence the spacetime $(M,\widetilde g_{ab})$ is {\quasireg}.
\end{proof}

\subsection{{\qquasireg} warped products}
\label{s_qreg_examples_warped}

Another example useful in cosmology is the following, which is a generalization of the warped products. Warped products are extensively researched, since they allow the construction of {\semiriem} spacetimes, having applications to GR. But when the warping function becomes $0$, singularities occur (see \eg \citep{ONe83}{ 204}). Fortunately, in the cases of interest for General Relativity, these singularities are {\quasireg}. We will allow the warped function $f$ to become $0$ (generalizing the standard definition \cite{ONe83}, where it is not allowed to vanish because it leads to degenerate metrics), and prove that what the resulting singularities are {\quasireg}.

\begin{definition}
\label{def_wp}
Let $(B,\de s_B^2)$ and $(F,\de s_F^2)$ be two {\semiriem} manifolds, and $f: B\to\R$ a smooth function on $B$. The \textit{degenerate warped product} of $B$ and $F$ with \textit{warping function} $f$ is the manifold $B\times_f F:=\(B\times F,\de s_{B\times F}^2\)$, with the metric
\begin{equation}
\de s_{B\times F}^2 = \de s_B^2 + f^2\de s_F^2
\end{equation}
\end{definition}

\begin{theorem}[{\qquasireg} warped product]
\label{thm_quasireg_example_wp}
A degenerate warped product $B\times_f F$ with $\dim B=1$ is {\quasireg}.
\end{theorem}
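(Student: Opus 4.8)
The plan is to verify the two defining conditions of Definition \ref{def_quasi_regular} directly. Semi-regularity of the degenerate warped product is the prerequisite, so I would establish it first (as in the conformal case this should follow from \cite{Sto11a}, the block-diagonal product-and-warping structure being compatible with the requirement that the Koszul form land in $\flat(T_pM)$). Condition (1), non-degeneracy on a dense set, holds because the metric $\de s_B^2+f^2\de s_F^2$ is block diagonal with {\nondeg} blocks, hence {\nondeg} precisely where $f\neq 0$; provided $f$ vanishes only on a nowhere-dense set (as in all cases of interest, e.g. the isolated Big Bang zero of the {\FLRW} scale factor) the set $\{f\neq 0\}$ is open and dense. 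The substance of the theorem is condition (2): the tensors $S_{abcd}$ and $E_{abcd}$, computed on $\{f\neq 0\}$ with the ordinary Levi-Civita curvature, must extend smoothly across the zeros of $f$. So the strategy is to compute these tensors on the dense non-degenerate locus and show that every component extends to a smooth function on all of $M$.

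Next I would introduce a coordinate $t$ on $B$ with $\de s_B^2=\varepsilon\,\de t^2$, $\varepsilon=\pm1$, together with fiber coordinates on $F$, and invoke the warped-product curvature formulas (\eg \cite{ONe83}). The decisive simplification is $\dim B=1$, which forces $\Ric_B=0$ and collapses the Hessian of $f$ to $f''$. This yields: the fiber-fiber Ricci components $R_{ij}=\ric^F_{ij}-\varepsilon\,g^F_{ij}(ff''+(n-2)(f')^2)$ are manifestly smooth; the mixed components $R_{ti}$ vanish; and the only divergent pieces are $R_{tt}=-(n-1)f''/f$ and the scalar $R=f^{-2}R^F-2\varepsilon(n-1)f''/f-\varepsilon(n-1)(n-2)(f')^2/f^2$. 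The key observation is that these blow up no worse than $f^{-2}$, and moreover $f^2R_{tt}=-(n-1)ff''$ and $f^2R=R^F-2\varepsilon(n-1)ff''-\varepsilon(n-1)(n-2)(f')^2$ are both smooth.

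I would then exploit the fact that the Kulkarni-Nomizu product with the metric always supplies the compensating powers of $f$. Every nonzero component of $(g\circ g)_{abcd}$ carries a factor $f^2$: a component with no fiber metric factor would require all four indices to equal $t$, which vanishes by antisymmetry, so at least one factor $g_{ij}=f^2g^F_{ij}$ is present. Since $S_{abcd}=\frac{R}{2n(n-1)}(g\circ g)_{abcd}$ and $R$ diverges only as $f^{-2}$, the product $R\,(g\circ g)$ is smooth, whence $S_{abcd}$ is smooth. For $E_{abcd}=\frac{1}{n-2}(S\circ g)_{abcd}$ with $S_{ab}=R_{ab}-\frac1n Rg_{ab}$, the only divergent component of $S_{ab}$ is $S_{tt}$, and in the product $(S\circ g)_{abcd}$ it always appears multiplied by a metric factor whose two indices both lie in the fiber (the Riemann-type antisymmetry forbids $t$ from occupying both slots of an antisymmetric pair), i.e. multiplied by $f^2g^F$. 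Since $f^2S_{tt}=f^2R_{tt}-\tfrac{\varepsilon}{n}f^2R$ is smooth and the remaining factors ($S_{ij}$, smooth because $R_{ij}$ and $Rf^2$ are, together with smooth metric components) are regular, $E_{abcd}$ extends smoothly. Combined with semi-regularity this gives quasi-regularity.

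The main obstacle I anticipate is not the cancellation computation, which is clean once the $f$-power bookkeeping is organized by index type, but the two prerequisite points: verifying semi-regularity at the degenerate locus (confirming the conditions of Definition \ref{def_semi_regular} survive the vanishing of $f$) and justifying that the objects computed on $\{f\neq 0\}$ are the correct smooth extensions in the semi-regular sense. Both are handled by working on the dense non-degenerate set and appealing to the smooth-extension formulation of Definition \ref{def_quasi_regular}, but they are where care is required; the role of the hypothesis $\dim B=1$, through $\Ric_B=0$ and the resulting $f^{-2}$-bounded divergence, is exactly what makes the compensation by the $f^2$ in the metric factors possible.
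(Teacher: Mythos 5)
Your proposal is correct and follows essentially the same route as the paper's proof: invoke semi-regularity of the degenerate warped product (the paper cites \cite{Sto11b} for this, not \cite{Sto11a}), apply O'Neill's warped-product curvature formulas with $\dim B=1$ so that the Ricci and scalar curvatures diverge at worst like $f^{-1}$ and $f^{-2}$, and observe that the antisymmetry of the Kulkarni-Nomizu product forces every potentially divergent term to be multiplied by a fiber metric factor $f^2 g_F$, which restores smoothness. Your explicit coordinate bookkeeping, the handling of $E_{abcd}$ via the single divergent component $S_{tt}$ of the traceless Ricci tensor, and the density caveat on the zero set of $f$ are simply a more detailed rendering of the same cancellation argument.
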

\begin{proof}
From \cite{Sto11b}, $B\times_f F$ is {\semireg}.

Let's denote by $g_B$, $g_F$ and $g$ the metrics on $B$, $F$ and $B\times_f F$.
It is known (\cite{ONe83}, p. 211) that for horizontal vector fields $X,Y\in\fivectlift{B \times F,B}$ and vertical vector fields $V,W\in\fivectlift{B \times F,F}$,
\begin{enumerate}
	\item $\ric(X,Y) = \ric_B(X,Y) + \dsfrac{\dim F}{f}H^f(X,Y)$
	\item $\ric(X,V) = 0$
	\item $\ric(V,W) = \ric_F(V,W) + \(f\Delta f + (\dim F-1)g_B(\grad f,\grad f)\)g_F(V,W)$
\end{enumerate}
where $\Delta f$ is the Laplacian, $H^f$ the Hessian, and $\grad f$ the gradient.
It follows that $\ric(X,V)$ and $\ric(V,W)$ are smooth, but $\ric(X,Y)$ in general is not, because of the term containing $f^{-1}$. But since $\dim B=1$, the only terms in the Kulkarni-Nomizu product $\Ric\circ g$ containing $\Ric(X,Y)$ are of the form 
\begin{equation*}
\Ric(X,Y)g(V,W)=f^2\Ric(X,Y)g_F(V,W).
\end{equation*}
Hence, $\Ric\circ g$ is smooth.

From the expression of the scalar curvature
\begin{equation}
\label{eq_scalar_curv_wp}
	R = R_B + \frac {R_F}{f^2} + 2\dim F\dsfrac{\Delta f}{f} + \dim F(\dim F - 1)\dsfrac{g_B(\grad f,\grad f)}{f^2}
\end{equation}
can be concluded that $S_{abcd}$ is smooth too, because $g\circ g$ contains at least one factor of $f^2$. Hence, $B\times_f F$ is {\quasireg}.
\end{proof}

The following example important in cosmology is a direct application of this result.

\begin{proposition}[{\ssemireg} manifold which is not {\quasireg}]
Let $B=\R^k$, $k>1$, be an Euclidean space, with the canonical metric $g_B$, and $f:B\to\R$ a linear function $f\neq 0$. Let $F=\R^l$, $l>1$, with the canonical metric $g_F$. 
Then the degenerate warped product $B\times_f F$ is {\semireg}, but it isn't {\quasireg}.
\end{proposition}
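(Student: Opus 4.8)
The plan is to treat the two assertions separately, since they are of very different difficulty. Semi-regularity is not the hard part: because $B\times_f F$ is a degenerate warped product, it is {\semireg} by the general result of \cite{Sto11b}, independently of $\dim B$. The entire content of the proposition therefore lies in showing that the second condition of Definition \ref{def_quasi_regular} fails, i.e.\ that $S_{abcd}$ (and hence the pair $S_{abcd},E_{abcd}$) does \emph{not} extend smoothly across the degeneracy locus $\{f=0\}$. The strategy is to exhibit a single component of $S_{abcd}$, computed on the dense {\nondeg} region $\{f\neq 0\}$, that blows up as $f\to 0$; this already breaks smoothness, so I will not need to examine $E_{abcd}$ at all.

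First I would specialize the scalar curvature formula \eqref{eq_scalar_curv_wp} to the present flat, linear data. Because $B=\R^k$ and $F=\R^l$ carry their canonical metrics they are flat, so $R_B=0$ and $R_F=0$; because $f$ is linear its Laplacian vanishes, $\Delta f=0$, while its gradient is a nonzero constant vector, so $c:=g_B(\grad f,\grad f)>0$ is a positive constant. Substituting into \eqref{eq_scalar_curv_wp} collapses every term except the last, giving
\[
R=l(l-1)\,\dsfrac{c}{f^2},
\]
which diverges as $f\to 0$ since $l>1$ and $c>0$.

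The key step is then to feed this into the scalar part $S_{abcd}=\dsfrac{1}{2n(n-1)}R\,(g\circ g)_{abcd}$ from \eqref{eq_ricci_part_S}, with $n=k+l$. Here the hypothesis $k>1$ is decisive: the base has at least two independent horizontal directions, say $\partial_1,\partial_2$, on which the metric is the constant Euclidean one, entirely independent of $f$. Evaluating the purely horizontal component from \eqref{eq_kulkarni_nomizu} gives $(g\circ g)_{1212}=2g_{11}g_{22}=2\neq 0$, a nonzero constant, so
\[
S_{1212}=\dsfrac{1}{2n(n-1)}R\cdot 2=\dsfrac{c\,l(l-1)}{n(n-1)}\,\dsfrac{1}{f^2}.
\]
This is unbounded as one approaches the hyperplane $\{f=0\}$, which is nonempty and lies in the closure of $\{f\neq 0\}$. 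Hence $S_{abcd}$ admits no continuous, let alone smooth, extension across the singularity, the second condition of Definition \ref{def_quasi_regular} is violated, and $B\times_f F$ is not {\quasireg}.

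I expect the only genuine subtlety to be conceptual rather than computational: one must be sure the divergence is real and not cancelled, and the cleanest way to see this is to contrast with Theorem \ref{thm_quasireg_example_wp}. When $\dim B=1$ there is a single horizontal direction, so antisymmetry forces every nonzero component of $g\circ g$ to carry at least one vertical metric factor, each contributing an $f^2$ that exactly absorbs the $f^{-2}$ coming from $R$; that is precisely the cancellation used there. With $k>1$ the purely horizontal component $(g\circ g)_{1212}$ survives with no vertical factor and no compensating power of $f$, so the $f^{-2}$ singularity of $R$ passes undamped into $S_{1212}$. The main point to verify is therefore just that this one component is nonzero and $f$-independent, which the Euclidean form of $g_B$ guarantees; this also shows that the hypothesis $\dim B=1$ in Theorem \ref{thm_quasireg_example_wp} is sharp.
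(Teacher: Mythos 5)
Your proposal is correct and takes essentially the same route as the paper's proof: semi-regularity is quoted from \cite{Sto11b}, the scalar curvature specializes to $R=l(l-1)\,g_B(\grad f,\grad f)/f^2$, and the hypothesis $k>1$ supplies a purely horizontal, $f$-independent component of $g\circ g$ through which the $f^{-2}$ divergence passes into $S_{abcd}$. Your explicit evaluation of $S_{1212}$ is just a concrete instantiation of the paper's observation that the singular term $R\,g_B\circ g_B$ is nonvanishing and cannot be cancelled by the remaining terms, all of which carry a factor $f^2 g_F$ and are therefore smooth.
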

\begin{proof}
Because $f$ is linear but not constant, $\grad f\neq 0$ is constant, and $\Delta f=0$. 
The scalar curvature \eqref{eq_scalar_curv_wp} becomes $R=l(l - 1)\dsfrac{g_B(\grad f,\grad f)}{f^2}$, which is singular at $0$. Because $k>1$, $g_B\circ g_B$ doesn't vanish, hence it doesn't cancel the denominator $f^2$ of $R$ in the term $R g_B\circ g_B$. Also, the term $R g_B\circ g_B$ is not canceled by other terms composing $S_{abcd}$, because they are all smooth, containing at least one $g_F$. Hence, $S_{abcd}$ is singular, and the degenerate warped product $B\times_f F$ isn't {\quasireg}. On the other hand, according to \cite{Sto11b}, because $B$ and $F$ are {\nondeg}, $B\times_f F$ is {\semireg}.
\end{proof}

\subsection{The {\flrw} spacetime}
\label{s_qreg_examples_flrw}

The {\flrw} ({\FLRW}) spacetime is defined as the warped product $I\times_a \Sigma$, where
\begin{enumerate}
	\item 
	$I\subseteq \R$ is an interval representing the time, which is viewed as a {\semiriem} space with the negative definite metric $-c^2\de t^2$.
	\item 
	$(\Sigma,\de\Sigma^2)$ is a three-dimensional Riemannian space, usually one of the homogeneous spaces $S^3$, $\R^3$, and $H^3$ (to model the homogeneity and isotropy conditions at large scale). Then the metric on $\Sigma$ is, in spherical coordinates $(r,\theta,\phi)$,
\begin{equation}
\label{eq_flrw_sigma_metric}
\de\Sigma^2 = \dsfrac{\de r^2}{1-k r^2} + r^2\(\de\theta^2 + \sin^2\theta\de\phi^2\),
\end{equation}
where $k=1,0,-1$, for the $3$-sphere $S^3$, the Euclidean space $\R^3$, or hyperbolic space $H^3$ respectively.
	\item
	$a: I\to \R$ is a function of time.
\end{enumerate}

The {\FLRW} metric is
\begin{equation}
\label{eq_flrw_metric}
\de s^2 = -c^2\de t^2 + a^2(t)\de\Sigma^2.
\end{equation}

At any moment of time $t\in I$ the space is $\Sigma_t=(\Sigma,a^2(t)g_\Sigma)$.

For a {\FLRW} universe filled with a fluid with mass density $\rho(t)$ and pressure density $p(t)$, the stress-energy tensor is defined as
\begin{equation}
\label{eq_friedmann_stress_energy}
T^{ab} = \(\rho + \dsfrac{p}{c^2}\)u^a u^b + p g^{ab},
\end{equation}
where $g(u,u)=-c^2$.

From Einstein's equation with the stress-energy tensor \eqref{eq_friedmann_stress_energy} follow the \textit{Friedmann equation}
\begin{equation}
\label{eq_friedmann_density}
\rho = \kappa^{-1}\(3\dsfrac{\dot{a}^2 + kc^2}{c^2 a^2} - \Lambda \),
\end{equation}
which gives the mass density $\rho(t)$ in terms of $a(t)$, and the \textit{acceleration equation}
\begin{equation}
\label{eq_acceleration}
\dsfrac{p}{c^2} = \dsfrac{2}{\kappa c^2}\(\dsfrac{\Lambda}{3}-\dsfrac{1}{c^2} \dsfrac{\ddot{a}}{a}\) - \dsfrac \rho 3,
\end{equation}
giving the pressure density $p(t)$.

A question that may arise is what happens with the densities $\rho$ and $p$. Equations \eqref{eq_friedmann_density} and \eqref{eq_acceleration} show that $\rho$ and $p$ may diverge in most cases for $a\to 0$. As explained in \cite{Sto11h}, $\rho$ and $p$ are calculated considering orthonormal frames. If the frame is not necessarily orthonormal (because there is no orthonormal frame at the point where the metric is degenerate), then the volume element is not necessarily equal to $1$, and it has to be included in the equations. The scalars $\rho$ and $p$ are replaced by the differential $4$-forms which have the components $\rho\sqrt{-g}$ and $p\sqrt{-g}$. It can be seen by calculation that these forms are smooth.
If the metric on the manifold $\Sigma$ is denoted by $g_{\Sigma}$, then the Friedmann equation \eqref{eq_friedmann_density} becomes
\begin{equation}
\label{eq_friedmann_density_tilde}
\rho\sqrt{-g} = \dsfrac{3}{\kappa}a\(\dot a^2 + k\) \sqrt{g_{\Sigma}},
\end{equation}
and the acceleration equation \eqref{eq_acceleration} becomes
\begin{equation}
\label{eq_acceleration_tilde}
\rho\sqrt{-g} + 3p\sqrt{-g} = -\dsfrac{6}{\kappa}a^2\ddot{a} \sqrt{g_{\Sigma}},
\end{equation}
hence $\rho\sqrt{-g}$ and $p\sqrt{-g}$ are smooth.

As $a\to 0$, the metric becomes degenerate, $\rho$ and $p$ diverge, and therefore the stress-energy tensor \eqref{eq_friedmann_stress_energy} diverges too. Because of this, the Ricci tensor also diverges. But, from Theorem \ref{thm_quasireg_example_wp}, $R_{abcd}$, $E_{abcd}$, and $S_{abcd}$ are smooth. What can be said about the expanded stress-energy tensor $(T \circ g)_{abcd}$? The following corollary shows that the metric is {\quasireg}, hence the expanded stress-energy tensor is smooth.

\begin{corollary}
\label{thm_flrw}
The {\FLRW} spacetime with smooth $a: I\to \R$ is {\quasireg}.
\end{corollary}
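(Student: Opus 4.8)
The plan is to recognize the FLRW spacetime as a direct instance of the degenerate warped product studied in Theorem \ref{thm_quasireg_example_wp}, so that the corollary follows essentially by matching definitions. The FLRW metric \eqref{eq_flrw_metric} is the warped product $I \times_a \Sigma$, where the base is $B = I \subseteq \R$ with metric $-c^2 \de t^2$ and the fiber is $F = \Sigma$ with the Riemannian metric \eqref{eq_flrw_sigma_metric}, and the warping function is $a: I \to \R$. The crucial structural observation is that $\dim B = \dim I = 1$, which is precisely the hypothesis of Theorem \ref{thm_quasireg_example_wp}.

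First I would verify that the objects in Definition \ref{def_wp} are satisfied: $B$ and $F$ are {\semiriem} manifolds (the base being one-dimensional with a negative-definite metric, the fiber being a three-dimensional Riemannian space), and $a$ is a smooth function on $B$, as assumed in the statement. The definition permits $a$ to vanish, which is exactly the degenerate case of interest, corresponding to the Big Bang singularity at $a \to 0$. Having confirmed the hypotheses, I would then invoke Theorem \ref{thm_quasireg_example_wp} directly to conclude that $I \times_a \Sigma$ is {\quasireg}, meaning that $S_{abcd}$ and $E_{abcd}$ extend smoothly across the locus where $a = 0$.

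The one point requiring a small amount of care, rather than a pure citation, is confirming that the FLRW spacetime qualifies as a {\semireg} spacetime and not merely a {\semireg} manifold, so that the conclusion is a {\quasireg} spacetime in the sense of Definition \ref{def_quasi_regular}. This amounts to checking the signature condition in Definition \ref{def_semi_regular}: away from $a = 0$ the metric has Lorentzian signature $(3,1,0)$, and as $a \to 0$ the fiber directions degenerate, yielding a signature $(r,s,t)$ with $s \leq 3$ and $t \leq 1$, while the metric remains {\nondeg} on the dense subset where $a \neq 0$ (since $a$ is not identically zero). The denseness of $\{a \neq 0\}$ follows because the zeros of a smooth non-trivial warping function form a set whose complement is dense on the interval $I$; in the physically relevant case $a$ vanishes only at isolated times.

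I do not expect any genuine obstacle here, since the substantive analytic work — the cancellation of the $f^{-1}$ divergence in $\ric(X,Y)$ by the factor $f^2$ supplied by the metric in the Kulkarni-Nomizu product, and the compensation of the $f^{-2}$ and $f^{-3}$ terms in the scalar curvature \eqref{eq_scalar_curv_wp} by the $f^2$ factors in $g \circ g$ — was already carried out in the proof of Theorem \ref{thm_quasireg_example_wp}. The role of this corollary is to make explicit that the abstract one-dimensional-base warped product result specializes to the most important cosmological example, where the warping function $a(t)$ is the scale factor and its vanishing encodes the Big Bang. The proof is therefore a direct application, with the only verification being the signature bookkeeping needed to upgrade ``{\quasireg} manifold'' to ``{\quasireg} spacetime.''
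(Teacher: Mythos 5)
Your proposal is correct and follows essentially the same route as the paper, which proves this corollary in one line by observing that the {\FLRW} spacetime is a degenerate warped product of a $1$-dimensional base and a $3$-dimensional fiber with warping function $a$, and then invoking Theorem \ref{thm_quasireg_example_wp}. Your additional checks (the hypotheses of Definition \ref{def_wp}, the signature condition, and the denseness of $\{a \neq 0\}$) are sensible bookkeeping that the paper leaves implicit, but they do not change the argument.
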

\begin{proof}
Since the {\FLRW} spacetime is a warped product between a $1$-dimensional and a $3$-dimensional manifold with warping function $a$, this is a direct consequence of Theorem \ref{thm_quasireg_example_wp}. 
\end{proof}

\begin{remark}
Corollary \ref{thm_flrw} applies not only to a {\FLRW} universe filled with a fluid, but to more general ones. For this particular case a direct proof was given in \cite{Sto12a}, showing explicitly how the expected infinities of the physical fields cancel out.
\end{remark}

While the expanded Einstein equation for the {\FLRW} spacetime with smooth $a$ is written in terms of smooth objects like $E_{abcd}$, $S_{abcd}$, and $T_{abcd}:=(T \circ g)_{abcd}$, a question arises, as to why use these objects, instead of $R_{ab}$, $S$, and $T_{ab}$? It is true that the expanded objects remain smooth, while the standard ones don't, but is there other, more fundamental reason? It can be said that $E_{abcd}$ and $S_{abcd}$ are more fundamental, since $R_{ab}$ and $R$ are obtained from them by contractions. But for $T_{abcd}$, unfortunately, at this time we don't know an interpretation. The stress-energy tensor $T_{ab}$ can be obtained from a Lagrangian, but we don't know yet a way to obtain directly $T_{abcd}$ from a Lagrangian. One hint that, at least for some fields, $T_{abcd}$ seems more fundamental is that, for electrovac solutions, it is given by $T_{abcd}=-\frac{1}{8\pi}\(F_{ab}F_{cd} + {}^\ast F_{ab} {}^\ast F_{cd}\)$ \eqref{eq_stress_energy_maxwell_expanded}, while $T_{ab}$ by contracting it \eqref{eq_stress_energy_maxwell}. Similar form has the stress-energy tensor for Yang-Mills fields.

Another question that may appear is what is obtained, given that the solution can be extended beyond the moment when $a(t)=0$? Say that $a(0)=0$. The extended solution will describe two universes, both originating from the same Big-Bang at the same moment $t=0$, one of them expanding toward the direction in which $t$ increases and the other one toward the direction in which $t$ decreases. The parameter $t$ is just a coordinate, and the physical laws are symmetric with respect to time reversal in General Relativity (if one wants to consider quantum fields, the combined symmetry $CPT$ should be considered instead of $T$ alone).

\subsection{{\schw} black hole}
\label{s_qreg_examples_schw}

The {\schw} solution describing a black hole of mass $m$ is given in the {\schw} coordinates by the metric tensor:
\begin{equation}
\label{eq_schw_schw}
\de s^2 = -\(1-\dsfrac{2m}{r}\)\de t^2 + \(1-\dsfrac{2m}{r}\)^{-1}\de r^2 + r^2\de\sigma^2,
\end{equation}
where
\begin{equation}
\label{eq_sphere}
\de\sigma^2 = \de\theta^2 + \sin^2\theta \de \phi^2
\end{equation}
is the metric of the unit sphere $S^2$. The units were chosen so that $c=1$ and $G=1$ (see \eg \citep{HE95}{149}).

Apparently the metric is singular at $r=2m$, on the event horizon. As it is known from the work of Eddington \cite{eddington1924comparison} and Finkelstein \cite{finkelstein1958past} appropriate coordinate changes make the metric {\nondeg} on the event horizon, showing that the singularity is apparent, being due to the coordinates. The coordinate change is singular, but it can be said that the proper coordinates around the event horizon are those of Eddington and Finkelstein, and the {\schw} coordinates are the singular coordinates.

Can we apply a similar method for the singularity at $r=0$? It can be checked that the Kretschmann scalar $R_{abcd}R^{abcd}$ is singular at $r=0$, and since scalars are invariant at any coordinate changes (including the singular ones), it is usually correctly concluded that the singularity at $r=0$ cannot be removed. Although it cannot be removed, it can be improved by finding coordinates making the metric analytic at $r=0$. As shown in \cite{Sto11e} the singularity $r=0$ in the {\schw} metric \eqref{eq_schw_schw} has two origins -- it is a combination of degenerate metric and singular coordinates. Firstly, the {\schw} coordinates are singular at $r=0$, but they can be desingularized by applying the coordinate transformations from equation \eqref{eq_coordinate_semireg} which necessarily have the Jacobian equal to zero at $r=0$. It is not possible to desingularize a coordinate system, by using transformations that have non-vanishing Jacobian at the singularity, because such transformations preserve the regularity of the metric. Secondly, after the transformation the singularity is not completely removed, because the metric remains degenerate. However, the metric remains {\semireg}, as shown in \cite{Sto11e}. Here will be shown that it is also {\quasireg}.

In \cite{Sto11e} we showed that the {\schw} solution can be made analytic at the singularity by a coordinate transformation of the form
\begin{equation}
\label{eq_coordinate_change}
\left\{
\begin{array}{ll}
r &= \tau^S \\
t &= \xi\tau^T \\
\end{array}
\right.
\end{equation}
As it turns out,
\begin{equation}
\label{eq_coordinate_semireg}
\left\{
\begin{array}{ll}
r &= \tau^2 \\
t &= \xi\tau^4 \\
\end{array}
\right.
\end{equation}
is the only choice which makes analytic at the singularity not only the metric, but also the Riemann curvature $R_{abcd}$. In the new coordinates the metric has the form
\begin{equation}
\label{eq_schw_semireg}
\de s^2 = -\dsfrac{4\tau^4}{2m-\tau^2}\de \tau^2 + (2m-\tau^2)\tau^4\(4\xi\de\tau + \tau\de\xi\)^2 + \tau^4\de\sigma^2.
\end{equation}

\begin{corollary}
\label{thm_schw_quasireg}
The {\schw} spacetime is {\quasireg} (in any atlas compatible with the coordinates \eqref{eq_coordinate_semireg}).
\end{corollary}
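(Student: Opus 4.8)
The plan is to verify directly the two defining conditions of a {\quasireg} spacetime from Definition \ref{def_quasi_regular}, taking as given from \cite{Sto11e} that, in the coordinates \eqref{eq_coordinate_semireg}, the {\schw} manifold is a {\semireg} spacetime and that its Riemann curvature $R_{abcd}$ is analytic (in particular smooth) at the singularity $\tau=0$. The whole argument hinges on one structural fact that is special to this example and absent from the warped-product and isotropic cases treated above: the {\schw} metric is a \emph{vacuum} solution, so that wherever the metric is {\nondeg} one has $R_{ab}=0$.

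For the first condition, I would read off from the explicit form \eqref{eq_schw_semireg} the block structure in the coordinates $(\tau,\xi,\theta,\phi)$, namely a $(\tau,\xi)$ block together with the angular $S^2$ block $\tau^4\de\sigma^2$. Each block degenerates as $\tau\to 0$ (the angular determinant is a multiple of $\tau^8$, and the $(\tau,\xi)$ block carries further positive powers of $\tau$), so $\det g$ vanishes exactly on the desingularized locus $r=0$, i.e. $\tau=0$, which is nowhere dense; near the apparent singularity $\tau^2=2m$ the metric is rendered {\nondeg} by an Eddington--Finkelstein-type chart, as recalled in \sref{s_qreg_examples_schw}. Hence $g_{ab}$ is {\nondeg} on a dense subset, which is condition (1).

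For the second condition, I would use the vacuum property to collapse the Ricci decomposition. From $R_{ab}=0$ on the dense {\nondeg} set it follows that $R=g^{st}R_{st}=0$ and that the traceless Ricci part \eqref{eq_ricci_traceless} satisfies $S_{ab}=R_{ab}-\frac14 R g_{ab}=0$ there. Substituting into \eqref{eq_ricci_part_S} and \eqref{eq_ricci_part_E} gives $S_{abcd}=0$ and $E_{abcd}=0$ identically on the dense {\nondeg} region. The zero tensor extends smoothly to all of $M$ by zero, so both $S_{abcd}$ and $E_{abcd}$ admit smooth extensions, which is condition (2). Combined with the fact, recalled from \cite{Sto11e}, that the manifold in these coordinates is a {\semireg} spacetime, Definition \ref{def_quasi_regular} then yields that the {\schw} spacetime is {\quasireg}.

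The genuine difficulty of this result is not in the corollary itself but in the input imported from \cite{Sto11e}: singling out the exponents \eqref{eq_coordinate_semireg} for which both the metric \emph{and} $R_{abcd}$ become analytic at $r=0$, and establishing {\semireg}ity of the resulting degenerate metric. Granting that, the step above is essentially algebraic, the only point requiring care being that one must argue at the level of the covariant tensors $S_{abcd}$, $E_{abcd}$ rather than their contractions: the Ricci tensor $R_{ab}$ and the Kretschmann scalar formed with the singular inverse metric still diverge at $\tau=0$, whereas the vacuum condition forces the relevant covariant pieces to vanish. Consistently, the decomposition \eqref{eq_ricci_decomposition} here reduces to $R_{abcd}=C_{abcd}$, so that the smoothness of the curvature is carried entirely by the Weyl part.
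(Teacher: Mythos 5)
Your proposal is correct and takes essentially the same route as the paper's own proof: {\semireg}ity is imported from \cite{Sto11e}, and Ricci-flatness of the vacuum solution forces $S_{abcd}=E_{abcd}=0$ on the dense {\nondeg} set, so both extend smoothly (by zero) and the curvature reduces to its Weyl part $C_{abcd}=R_{abcd}$, which is smooth. Your explicit verification of condition (1) of Definition \ref{def_quasi_regular} from the block structure of \eqref{eq_schw_semireg} is a harmless addition, since density of the {\nondeg} set is already built into the definition of a {\semireg} spacetime.
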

\begin{proof}
We know from \cite{Sto11e} that the {\schw} spacetime is {\semireg}. Since it is also Ricci flat, \ie $R_{ab}=0$, it follows that $S_{ab}=1$ and $R=0$, hence $S_{abcd}= \dsfrac{1}{24}R(g\circ g)_{abcd}=0$, and $E_{abcd}\dsfrac{1}{2}(S \circ g)_{abcd}=0$. Therefore, $S_{abcd}$ and $E_{abcd}$ are smooth. Consequently, the only non-vanishing part of the curvature in the Ricci decomposition \eqref{eq_ricci_decomposition} is the Weyl tensor $C_{abcd}$, which in this case is equal to $R_{abcd}$, so it is smooth too.
\end{proof}

\begin{remark}
It has been seen that even if the {\schw} metric $g_{ab}$ is singular at $r=0$ there is a coordinate system in which it becomes {\quasireg}. Because the metric becomes {\quasireg} at $r=0$, the expanded Einstein equations are valid at $r=0$ too. But also Einstein's equation can be extended at $r=0$, because in this special case it becomes $G_{ab}=0$, the  {\schw} solution being a vacuum solution.
Hence, in this case we can just use the standard Einstein equations, of course in coordinates compatible with the coordinates \eqref{eq_coordinate_semireg}. Corollary \ref{thm_schw_quasireg} shows that the {\schw} singularity is {\quasireg} in any such coordinates. Since $S_{abcd}=E_{abcd}=0$, the only non-vanishing part of $R_{abcd}$ is the Weyl curvature $C_{abcd}=R_{abcd}$, which is smooth because $R_{abcd}$ is smooth.
\end{remark}

\begin{remark}
In the limit $m=0$, the {\schw} solution \eqref{eq_schw_schw} coincides with the Minkowski metric, which is regular at $r=0$. The event horizon singularity $r=2m$ merges with the $r=0$ singularity, and cancel one another. Because the {\schw} radius becomes $0$, the false singularity $r=0$ is not spacelike as in the case $m> 0$, but timelike.
In the case $m=0$, because there is no singularity at $r=0$, our coordinates \eqref{eq_coordinate_semireg}, rather than removing a (non-existent) singularity, introduce one. The new coordinates provide a double covering for the Minkowski spacetime, because $\tau$ extends beyond $r=0$ to negative values, in a way similar to the case described in \cite{Sto11f}.
\end{remark}

\begin{openproblem}
What can be said about the other stationary black hole solutions?
In \cite{Sto11f} and \cite{Sto11g} we showed that there are coordinate transformations which make the {\rn} metric and the {\kn} metric analytic at the singularity. This is already a big step, because it allows us to foliate with Cauchy hypersurfaces these spacetimes. Is it possible to find coordinate transformations which make them {\quasireg} too?
\end{openproblem}

\section{Conclusions}

An important problem in General Relativity is that of singularities. At singularities some of the quantities involved in the Einstein equation become infinite. But there are other quantities which are also invariant and in addition remain finite at a large class of singularities. In this paper it has been seen that translating the Einstein equation in terms of such quantities allows it to be extended at such singularities.

The Riemann tensor is, from geometric and linear-algebraic viewpoints, more fundamental than the Ricci tensor $R_{ab}$, which is just its trace. This suggests that the scalar part $S_{abcd}$ \eqref{eq_ricci_part_S} and the Ricci part $E_{abcd}$ \eqref{eq_ricci_part_E} of the Riemann curvature may be more fundamental than the Ricci tensor. Consequently, this justifies the study of an equation equivalent to Einstein's, but in terms of $E_{abcd}$ and $S_{abcd}$, instead of $R_{ab}$ and $R$. This is the expanded Einstein equation \eqref{eq_einstein_expanded}. The idea that $E_{abcd}$ is more fundamental than $R_{ab}$ seems to be suggested also by the electrovac solution, with the expanded Einstein equation \eqref{eq_stress_energy_maxwell_expanded}, and from which the electrovac Einstein equation is obtained by contraction.

To go from Einstein's equation to its expanded version we use the Kulkarni-Nomizu product \eqref{eq_kulkarni_nomizu}. To go back, we use contraction \eqref{eq_expanded_to_standard}. When the metric is {\nondeg}, these operations establish an equivalence between the standard and the expanded Einstein equations.

The question of whether the Ricci part of the Riemann tensor is more fundamental than the Ricci tensor may be irrelevant, or the answer may be debatable. But an important feature is that $E_{abcd}$ and $S_{abcd}$ can be defined in more general situations than $R_{ab}$ and $R$. Hence, the expanded Einstein equation is more general than the Einstein equation -- it makes sense even when the metric is degenerate, at least for a class of singularities named {\quasireg}. 

A brief investigation revealed that the class of {\quasireg} singularities is rich enough to contain some known singularities, which were already considered by researchers, but now can be understood in a unified framework. Among these there are the isotropic singularities, which are obtained by multiplying a regular metric with a scaling factor which is allowed to vanish. Another class is given by the {\flrw} singularities \cite{Sto12a}, and other warped product singularities. Even the {\schw} singularity (in proper coordinates which make the metric analytic \cite{Sto11e}) turns out to be \quasireg.

The fact that these apparently unrelated types of singularities turn out to be {\quasireg} suggests the following open question:
\begin{openproblem}
Are {\quasireg} singularities general enough to cover all possible singularities of General Relativity?
\end{openproblem}

\subsection*{Acknowledgments}

The author thanks the anonymous referees for the valuable comments and suggestions to improve the clarity and the quality of this paper.


\end{document}